\def\be{\begin{equation}}
\def\ee{\end{equation}}
\def\bea{\begin{eqnarray}}
\def\eea{\end{eqnarray}}
\def\RR{\mathbb{R}}
\def\CC{\mathbb{C}}
\newtheorem{theorem}{Theorem}[section]
\numberwithin{equation}{section}
\begin{document}
\medskip

\begin{center}
{\Large \bf {On Fourier integral transforms }\\[0.5cm]
{for $q$-Fibonacci and $q$-Lucas polynomials}}

\bigskip

\noindent{Natig Atakishiyev and Pedro Franco}

\medskip

\noindent {Instituto de Matem\'aticas, Unidad Cuernavaca, Universidad Nacional \\
                Aut\'onoma de M\'exico, C.P. 62251 Cuernavaca, Morelos, M\'exico}

\medskip
\noindent{E-mail:  natig@matcuer.unam.mx, pedro@matcuer.unam.mx}

\bigskip

\noindent{Decio Levi}

\medskip

\noindent {Dipartimento di Ingegneria Elettronica \\
Universit\`a degli Studi Roma Tre and  INFN Sezione di Roma Tre \\
Via della Vasca Navale 84, 00146 Roma, Italy}

\medskip
\noindent{E-mail:  levi@Roma3.infn.it}
\bigskip

\noindent{Orlando Ragnisco}

\medskip

\noindent {Dipartimento di Fisica \\
Universit\`a degli Studi Roma Tre and  INFN Sezione di Roma Tre \\
Via della Vasca Navale 84, 00146 Roma, Italy}

\medskip
\noindent{E-mail:  Ragnisco@Roma3.infn.it}
\medskip

\end{center}

\begin{abstract}

We study in detail two families of $q$-Fibonacci polynomials and $q$-Lucas
polynomials, which are defined by  non-conventional three-term recurrences.
They were recently introduced by Cigler and have been then employed by Cigler
and Zeng to construct novel $q$-extensions of classical Hermite polynomials.
We show that both of these $q$-polynomial families exhibit simple transformation
properties with respect to the classical Fourier integral transform.

\end{abstract}

\medskip

\noindent PACS numbers: 02.30.Gp, 02.30.Tb, 02.30.Vv

\noindent Mathematics Subject Classification: 33D45, 39A70, 47B39

\medskip

\setcounter{equation}{0}
\section{Introduction}

The Askey scheme of hypergeometric orthogonal polynomials and their
$q$-analogues \cite{KSL} accumulates current knowledge about a large
number of these special functions. Depending on a number of parameters,
associated with each polynomial family, they occupy different levels
within the Askey hierarchy: for instance, the Hermite polynomials
$H_n(x)$ are on the ground level, the Laguerre and Charlier polynomials
$L_n^{(\alpha)}(x)$ and $C_n(x;a)$ are one level higher, and so on. All
polynomial families in this scheme are characterized by a ``canonical"
set of properties: they are solutions of differential or difference equations
of the second order, they can be generated by three-term recurrence relations,
they are orthogonal with respect to weight functions with finite or infinite
supports, they obey Rodrigues-type formulas, and so on. Of course, many other
polynomial families of interest arise both in pure and applied mathematics,
which do not belong to the Askey $q$-scheme only because they lack some of
the above mentioned characteristics properties. So this paper is aimed at
exploring in detail two particular $q$-polynomial families of this type,
namely, $q$-Fibonacci and $q$-Lucas polynomials, which are defined by
non-conventional three-term recurrences. They were introduced in
\cite{Cigl-I}--\,\cite{Cigl-III} and have been studied in detail
in \cite{Cigl-Zeng}. The Fibonacci and Lucas sequences and polynomials
have many physical applications; for example they appear in the study of diatomic
chains \cite{lang}, in dynamical systems and chaos theory \cite{sch},
in Ising models \cite{ising}, etc.

Our main result is to show that both of these $q$-polynomials exhibit simple
transformation properties with respect to the classical Fourier integral
transform.

Throughout this exposition we employ standard notations of the theory
of special functions (see, for example, \cite{GR}--\,\cite{KSL}).
In sections 2 and 3 we present some basic background facts about Fibonacci
and Lucas polynomials and their $q$-extensions, respectively, which are then
used in section 4 in order to find explicit forms of Fourier integral transforms
for the $q$-Fibonacci and $q$-Lucas polynomials. Section 5 contains the conclusions
and a brief discussion of some further research directions of interest. Finally,
Appendix concludes this work with the derivation of two transformation formulas
for hypergeometric ${}_2F_1$-polynomials, associated with the Chebyshev polynomials
$T_n(x)$ and $U_n(x)$.

\section{Fibonacci and Lucas polynomials}

In this Section we review the basic well known facts about the Fibonacci and Lucas polynomials.

\subsection{The Fibonacci polynomials }

$F_{n}(x,s)$ are defined by the
three-term recurrence relation
\be
F_{n+1}(x,s) = x\,F_{n}(x,s)\,+\,s\,F_{n-1}(x,s)\,,\qquad\quad
n\geq1\,,                                                  \ee
with initial values $F_0(x,s)=0$ and $F_1(x,s)=1$ \cite{Cigl-I,Cigl-II}.
They are also given by the explicit sum formula
\bea
F_{n+1}(x,s)&=&\sum_{k=0}^{\lfloor\,n/2\,\rfloor}{n-k
\atopwithdelims()k}s^{\,k}\,x^{\,n-\,2k}  \nonumber \\
&=&x^{\,n}\,{}_2F_1\Big(-\frac{n}{2}\,,\frac{1-n}{2}
\,;-\,n\,\Big|-4s/x^2\,\Big)\,,\qquad n\geq0\,, \eea
where ${\,n\,\atopwithdelims()\,k\,}= n!/k!(n-k)!$ is a binomial
coefficient and $\lfloor x \rfloor$ denotes the greatest integer
in $x$. The Fibonacci polynomials $F_{n}(x,s)$ have the following
generating function
\be
f_F(x,s;t):=\,\sum_{n=0}^{\infty}\,F_{n}(x,s)\,t^{\,n}\,=\,\frac
{t}{1-x\,t-s\,t^{\,2}}\,,\qquad\qquad |\,t\,|<1\,,           \ee
easily derived through the three-term recurrence
relation  (2.1).

The Fibonacci polynomials $F_{n}(x,s)$ are normalized so that
$F_n(x,1)=f_{n}(x)$ (where $f_{n}(x)$ are the Fibonacci polynomials
introduced by Catalan ( see \cite{Koshy}, formula (37.1) on p.443)
and for the particular values of $x=s=1$ the recursion (2.1) generates
a classical sequence of the Fibonacci numbers $\{F_{n}\}_{n=1}^{\infty}
\equiv\{1,1,2,3,5,8,13,...\}$ and the relation (2.3) reduces to the
well-known generating function $f_F(1,1;t)$ for the Fibonacci numbers
$\{F_n\}$, which "have been a source of delight to professional and
amateur mathematicians for seven centuries" \cite{Bers} (see also
\cite{Dunlap, NalliHauk, StakhAran}).

We call attention to the fact that the Fibonacci polynomials (2.2)
(of degree $n$ in $x$ and of $\lfloor\,n/2\,\rfloor$ in $s$) can
also be represented as
\bea
F_{n+1}(x,s)&=&\Big(2\,\sqrt{s}\,\Big)^{\,n}\,p_{\,n}
^{\,(F)}\Bigg(\frac{x}{2\,\sqrt{s}}\Bigg)\,,\nonumber \\
p_n^{\,(F)}(x)&:=&x^{\,n}\,{}_2F_1\Big(-\frac{n}{2}
\,,\frac{1-n}{2}\,;-\,n\,\Big|-1/x^2\,\Big)\,,\eea
so that the fundamental properties of $F_{n}(x,s)$ are basically defined 
by the {\it monic} polynomials $p_{\,n}^{\,(F)}(x)$.\footnote{We recall 
that an arbitrary polynomial $p_n(x)=\sum_{k=0}^n c_{n,\,k}\,x^k$ of degree 
$n$ can be written in the {\it monic form}  $p_n^{(M)}(x)=c_{n,\,n}^{-1}\,
p_n(x)=x^n + c_{n,\,n}^{-1}\sum_{k=0}^{n-1}c_{n,\,k}\,x^k$ just by changing 
its normalization.} Moreover, it turns out that the polynomials 
$p_{\,n}^{\,(F)}(x)$ {\it are essentially} the Chebyshev polynomials of 
the second kind $U_n(z)$ in an imaginary argument $z\in\CC$ (see p.449 in 
\cite{Koshy}). Indeed, recall that the Chebyshev polynomials $U_n(z)$
have an explicit representation (see formula (23) on p.185 in \cite{HTF})
\be
U_{n}(z)\,=\,\sum_{k=0}^{\lfloor\,n/2\,\rfloor}(-1)
^k\,{n-k\atopwithdelims()k}\,(2z)^{\,n-\,2k}\,.\ee
Therefore from (2.2), (2.4) and (2.5) one readily deduces that
\be
F_{n+1}(x,s)\,=\,\Big(\!-{\rm i}\sqrt{s}\,\Big)
^{n}\,U_n\Big({\rm i}\,x/2\sqrt{s}\Big)    \ee
and, consequently, $p_{\,n}^{\,(F)}(x)=({-\rm i}/2)^n\,U_n\Big
({\rm i}\,x\Big)$. Observe that from the three-term recurrence
relation
\be \label{pippo1}
2z\,U_{n}(z)\,= \,U_{n+1}(z)\,+\,U_{n-1}(z)
\ee
for the Chebyshev polynomials of the second kind $U_n(z)$ it follows
at once that
\be
p_{\,n+1}^{\,(F)}(x) = x\,p_{\,n}^{\,(F)}(x)\,+\,\frac14\,p_{\,n-1}
^{\,(F)}(x)\,,\qquad\quad n\geq1\,,                             \ee
with initial values $p_{\,0}^{\,(F)}(x)=1$ and $p_{\,1}^{\,(F)}(x)=x$.
The coefficients in the three-term recurrence relations (2.8) are $A_n=1$
and $C_n=-1/4$; so that they do not satisfy the conditions $A_n\,C_{n+1}>0$
of Favard's characterization theorem (see, for example, (7.1.5) on p.175 in
\cite{GR}). This means that there is not a unique positive orthogonality measure
for the polynomials $p_{\,n}^{\,(F)}(x)$.

Let us remark that:
\begin{enumerate}
\item Since the Chebyshev polynomials of the second kind
$U_n(z)$ can be expressed in terms of the hypergeometric ${}_2F_1$
polynomials as (see, for example, (9.8.36) in \cite{KSL})
\be
U_n(z)\,=\,(n+1)\,{}_2F_1\Big(-n\,,n+2\,;\,3/2\,\Big|\frac
{1-z}{2}\,\Big)\,,                                      \ee
then (2.6) is consistent with the second line in (2.2) only if a
transformation formula
\be
(n+1)\,{}_2F_1\Big(-n\,,n+2\,;\,3/2\,\Big|\frac{1-z}{2}
\,\Big)\,=\,(2 z)^{\,n}\,{}_2F_1\Bigg(-\frac{n}{2}\,,
\frac{1-n}{2}\,;-\,n\,\Big|1/z^{\,2}\,\Big)        \ee
is valid. A direct proof of  (2.10) is given  in the Appendix.

\item Since the Chebyshev polynomials of the second kind $U_n(x)$
are known to satisfy the second order differential equation (see (9.8.44)
in \cite{KSL})
\be
\Big[(1-x^2) \frac{d^{\,2}}{dx^{\,2}}-3\,x\frac{d}
{dx}+n(n+2)\Big]\,U_n(x)\,=\,0\,,              \ee
one readily deduces that
\be
\Big[\Big(1+x^2\Big)\frac{d^{\,2}}{dx^{\,2}} + 3\,x\frac{d}{dx}
\Big]\,p_{\,n}^{\,(F)}(x)\,=\,n(n+2)\,p_{\,n}^{\,(F)}(x)\,.\ee

\item The generating function for the Chebyshev polynomials of the second
kind $U_n(x)$ is known to be of the form (see (9.8.56) in \cite{KSL})
\be
\sum_{n=0}^{\infty}\,t^{\,n}\,U_{n}(x)\,=\,\frac{1}{1-2\,x\,t
+\,t^{\,2}}\,,\qquad\qquad |\,t\,|<1\,,                   \ee
so that combining (2.6) with (2.13) we get (2.3).
\end{enumerate}

\subsection{ The Lucas polynomials}

$L_{n}(x,s)$ for $n\geq3$ ($L_0(x,s)=1$) are defined by the same
three-term recurrence relation as in (2.1), but with initial values
$L_1(x,s)=x$ and $L_2(x,s)=x^2+2s$ \,\cite{Cigl-Zeng}. They have the
explicit sum formula
\bea
L_{n}(x,s)&=&\sum_{k=0}^{\lfloor\,n/2\,\rfloor}\,\frac{n}{n-k}
\,{n-k\atopwithdelims()k}\,s^{\,k}\,x^{\,n-\,2k} \nonumber  \\
&=& x^{\,n}\,{}_2F_1\Bigg(-\frac{n}{2}\,,\frac{1-n}{2}\,;1-n
\,;-\frac{4s}{x^2}\,\Bigg),\quad\quad n\geq0\,.               \eea
The Lucas polynomials $L_{n}(x,s)$ have a generating function
of the form
\be
f_L(x,s;t):=\,\sum_{n=0}^{\infty}\,L_{n}(x,s)\,t^{\,n}\,=\,\frac
{1+s\,t^{\,2}}{1-x\,t-s\,t^{\,2}}\,,\qquad\qquad |\,t\,|<1\,.\ee
To derive (2.15), multiply both sides of the three-term recurrence
relation for $L_{n}(x,s)$  by the factor $t^{\,n+1}$ and sum all
three terms with respect to the index $n$ from $n=2$ to infinity by
taking into account initial values of $L_0(x,s)$, $L_1(x,s)$ and
$L_2(x,s)$.

The Lucas polynomials $L_{n}(x,s)$ are normalized in such a way that
$L_{n}(x,1)=l_{n}(x)$ (where $l_{n}(x)$ are the Lucas polynomials
studied by Bicknell, see p.459 in \cite{Koshy}) and for the particular
values of $x=s=1$ the sequence $\{L_n(1,1)\}_{n=1}^{\infty}$ reproduces
Lucas numbers $\{L_n\}\equiv\{1,3,4,7,11,18,...\}$ and the relation (2.15)
reduces to the generating function $f_L(1,1;t)$ for these numbers
$\{L_n\}$ (see \cite{StakhAran} for applications of Lucas numbers
and \cite{NalliHauk} for some generalizations of Lucas polynomials).

We call attention to the fact that the Lucas polynomials (2.14) (of
degree $n$ in $x$ and of degree $\lfloor\,n/2\,\rfloor$ in $s$) can
also be represented as
\bea
L_{n}(x,s)&=&\,s^{\,n/2}\,p_{\,n}^{\,(L)}\Bigg(\frac{x}
{\sqrt{s}}\Bigg)\,,                          \nonumber \\
p_n^{\,(L)}(x)&:=&x^{\,n}\,{}_2F_1\Bigg(-\frac{n}{2}\,,\frac
{1-n}{2}\,;1-\,n\,\Bigg|-\frac{4}{x^2}\,\Bigg)\,,       \eea
so that the fundamental properties of $L_{n}(x,s)$ are basically
defined by the monic polynomials $p_n^{\,(L)}(x)$. Moreover, it
turns out that the polynomials $p_n^{\,(L)}(x)$  are in fact the
Chebyshev polynomials of the first kind $T_n(z)$ in an imaginary
argument $z$. Indeed, recall that the Chebyshev polynomials
$T_n(z)$ have an explicit representation (see (23) on p.185 in
\cite{HTF})
\bea
T_{n}(z)&=&\frac{n}{2}\,\sum_{k=0}^{\lfloor\,n/2\,\rfloor}\frac
{(-1)^k\,(n-k-1)!}{k!\,(n-2k)!}\,(2z)^{\,n-\,2k}    \nonumber \\
&\equiv & 2^{n-1}\,z^{\,n}\,{}_2F_1\Big(-\frac{n}{2}\,,\frac{1-n}
{2}\,;1-\,n\,\Big|\,1/z^{2}\,\Big)\,, \qquad n\geq1\,.       \eea
Therefore from (2.14), (2.16) and (2.17) it follows that
\be
L_{0}(x,s)=1\,,\qquad L_{n}(x,s)\,=\,2\Big(\!-{\rm i}\sqrt{s}\,\Big)^{n}
\,T_n\Bigg(\frac{{\rm i}\,x}{2\sqrt{s}}\Bigg)\,,\quad n\geq1\,,  \ee
and, consequently, $p_{\,0}^{\,(L)}(x)=1\,,\,p_{\,n}^{\,(L)}(x)=2
({-\rm i})^n\,T_n\Big ({\rm i}\,x/2\Big)\,,\,n\geq1$. Observe
that from (2.16), (2.18) and the three-term recurrence relation (\ref{pippo1})
for the Chebyshev polynomials of the first kind $T_n(z)$ it
follows at once that
\be
p_{\,n+1}^{\,(L)}(x) = x\,p_{\,n}^{\,(L)}(x)\,+\,p_{\,n-1}
^{\,(L)}(x)\,,\qquad\quad n\geq1\,,                    \ee
with initial values $p_{\,0}^{\,(L)}(x)=1$ and $p_{\,1}^{\,(L)}
(x)=x$. The coefficients in the three-term recurrence relations
(2.20) are $A_n=-\,C_n=1$; so that they do not satisfy the conditions
$A_n\,C_{n+1}>0$ of Favard's characterization theorem (see, for
example, (7.1.5) on p.175 in \cite{GR}).

It is to be stressed that there are at least {\it three} direct
consequences of the connection (2.18) between the Lucas polynomials
$L_{n}(x,s)$ and the Chebyshev polynomials $T_n\Big({\rm i}\,x/2
\sqrt{s}\Big)$ of the first kind.

\begin{enumerate}
\item Since the Chebyshev polynomials of the first kind
$T_n(z)$ can be written in terms of the hypergeometric ${}_2F_1$
polynomials as (see, for example, (9.8.35) in \cite{KSL})
\be
T_n(z)\,=\,{}_2F_1\Big(-n\,,n\,;\,1/2\,\Big|\frac{1-z}{2}\,\Big)\,,
\ee
then (2.18) is consistent with the second line in (2.14) only if a
transformation formula
\be
{}_2F_1\Big(-n\,,n\,;\,1/2\,\Big|\frac{1-z}{2}\,\Big)
\,=\,2^{n-1}\,z^{\,n}\,{}_2F_1\Big(-\frac{n}{2}\,,\frac
{1-n}{2}\,;1-\,n\,\Big|\,1/z^{2}\,\Big)\,,\quad n\geq1\,,
\ee
is valid. A proof of this identity  is given in Appendix.

\item Since the Chebyshev polynomials of the first kind
$T_n(x)$ are known to satisfy the second order differential equation
(see (9.8.43) in \cite{KSL})
\be
\Big[(1-x^2) \frac{d^{\,2}}{dx^{\,2}}- x \frac{d}{dx}+ n^{2}
\Big]\,T_n(x)\,=\,0\,,                                  \ee
from the relation $p_{\,n}^{\,(L)}(x)=2({-\rm i})^n\,T_n\Big ({\rm i}
\,x/2\Big)$ it follows at once that
\be
\Bigg[\Big(4+x^2\Big)\frac{d^{\,2}}{dx^{\,2}}+ x \frac{d}{dx}\Bigg]
\,p_{\,n}^{\,(L)}(x)\,=\,n^{\,2}\,p_{\,n}^{\,(L)}(x)\,.        \ee

\item The generating function for the Chebyshev polynomials
of the first kind $T_n(x)$ is known to be of the form (see (9.8.50)
in \cite{KSL})
\be
\sum_{n=0}^{\infty}\,t^{\,n}\,T_{n}(x)\,=\,\frac{1-x\,t}
{1-2\,x\,t+\,t^{\,2}}\,,\qquad\qquad |\,t\,|<1\,,   \ee
so that if one combines (2.18) with (2.25), this leads to  (2.15).
\end{enumerate}

\section{$q$-Fibonacci and $q$-Lucas polynomials}

In this Section we review mainly tfacts about the $q$-Fibonacci and $q$-Lucas polynomials.

\subsection{ $q$-Fibonacci polynomials.} 

Cigler defined in
\cite{Cigl-I,Cigl-II} a novel $q$-analogue of Fibonacci polynomials
$F_{n}(x,s)$, which satisfy the rather non-standard three-term recursion
\be
F_{n+1}(x,s|\,q) = \Big[\,x + (q-1)s\,{\cal D}_q\,\Big]\,F_{n}
(x,s|\,q)+ s\,F_{n-1}(x,s|\,q)\,,\quad\quad n\geq1\,,      \ee
where the initial values are $F_{0}(x,s|\,q)=0$ and $F_{1}(x,s|\,q)=1$,
and the Hahn $q$-difference operator ${\cal D}_q$ is defined as
\be
{\cal D}_q\,f(x):=\frac{f(x)-f(qx)}{(1-q)x}\,.      \ee
The $q$-Fibonacci polynomials $F_n(x,s|\,q)$ are explicitly given
in the form
\bea \label{pippo2}
F_{n+1}(x,s|\,q)= \sum_{k=0}^{\lfloor\,n/2\,\rfloor\,}\,q^{k(k+1)
/2}\,{n-k\atopwithdelims[]k }_q \,s^{\,k}\,x^{\,n-\,2\,k} , \\ \label{pippo3}
=x^{\,n}{}_4\phi_1\Bigg(q^{-\,n/2},q^{(1-n)/2},-\,q^{-\,n/2}
,-\,q^{(1-n)/2}\,;q^{-\,n}\,\Bigg|\,q\,;-\frac{q^{\,n}s}
{x^2}\Bigg)\,,\quad\quad n\geq0\,,                  \eea
where ${\,n\,\atopwithdelims []\,k\,}_q$ stands for the $q$-binomial
coefficient,
\be
{\,n\,\atopwithdelims []\,k\,}_q :=\frac{(q;q)_n}
{(q;q)_k(q;q)_{n-k}}\,,        \label{pippo3a}               \ee
and $(z;q)_n$ is the $q$-shifted factorial, that is, $(z;q)_0=1$,
\,$\,(z;q)_n=\prod_{k=0}^{n-1}(1-zq^k)$ for $n\geq 1$\,.

A $q$-extension of the generating function $f_F(x,s;t)$, associated
with the $q$-Fibonacci polynomials $F_n(x,s|\,q)$, has the form
({\it cf.} (2.3))
\bea
f_F(x,s;t|\,q)&:=& \,\sum_{n=0}^{\infty}\,F_{n}(x,s|\,q)\,t^{\,n}
\,=\,\frac{t}{1-x\,t}\,{}_1\phi_1\Big(q\,;qxt\Big|\,q\,;-\,qst^2\Big)
\nonumber \\
&=& t\,{}_2\phi_1\Big(\!-\frac{qst}{x}\,,q\,;0\,\Big|\,q\,;xt\Big)
\,,\qquad \qquad |\,t\,|<1\,.     \label{pippo4}                           \eea
The $q$-Fibonacci polynomials $F_n(x,s|\,q)$ are defined in such a way
that in the limit as $q\to1$ they reduce to the polynomials $F_n(x,s)$,
\be \label{pippo5}
F_n(x,s|\,1)\,\equiv\,\lim_{q\to1}F_n(x,s|\,q)\,=\,F_n(x,s)\,,\ee
and the generating function (\ref{pippo4}) for $F_n(x,s|\,q)$ coincides
therefore in this limit with (2.3), associated with the polynomials
$F_n(x,s)$.
The appearance in (\ref{pippo4}) of the two equivalent expressions in terms of
the either ${}_1\phi_1$\,, or ${}_2\phi_1$ basic hypergeometric
functions is fully consistent with the limit case of Heine's
transformation formula (see \cite{KSL}, formula (1.13.13) on p.20)
\bea \label{pippo6}
\,{}_2\phi_1\Big(a\,,b\,;0\Big|\,q\,;z\Big)\,=\,\frac{(bz;q)_{\infty}}
{(z;q)_{\infty}}\,{}_1\phi_1\Big(b\,;bz\Big|\,q\,;az\Big),         \eea
where $a=-\,qst/x$, $b=q$ and $z=xt$ (\,which means that the
quotient $(bz;q)_{\infty}/(z;q)_{\infty}$ simply reduces in
this case to the factor $1/(1-z)=1/(1-xt)$\,)\,.

\subsection{ $q$-Lucas polynomials.}

 A $q$-extension of the Lucas
polynomials $L_{n}(x,s)$ was introduced by Cigler in \cite{Cigl-II,Cigl-III}
as
\be \label{pippo7}
L_{n}(x,s\,|\,q)\,=\,L_n\Big(\,x +(q-1)s\,{\cal D}
_q\,,s\Big)\cdot1\,,                          \ee
where the Hahn $q$-difference operator ${\cal D}_q$ is defined in (3.2).
From (\ref{pippo7}) and the three-term recurrence relation for $L_{n}(x,s)$ it then
follows that the so introduced $q$-Lucas polynomials satisfy a non-standard
three-term recurrence relation of the form of (3.1) for $L_{n}
(x,s\,|\,q)$ for $ n\geq2\,$.
The initial values in this case are
$L_{0}(x,s|\,q)=1$, $L_{1}(x,s\,|\,q)=x$ and
$L_{2}(x,s\,|\,q)= x^2+(1+q)s$.
The $q$-Lucas polynomials $L_n(x,s\,|\,q)$ have an explicit
sum formula \\({\it cf.} (2.14) and (\ref{pippo2}))
\bea \label{pippo8}
L_{n}(x,s\,|\,q)= \sum_{k=0}^{\lfloor\,n/2\,\rfloor}
\,q^{k(k-1)/2}\frac{[n]_q}{[n-k]_q}\,{n-k\atopwithdelims[]k}
_q \,s^{\,k}\,x^{\,n-\,2\,k}                       \\ \label{pippo9}
= x^{\,n}{}_4\phi_1\Bigg(q^{-\,n/2},q^{(1-n)/2},-\,q^{-\,n/2}
,-\,q^{(1-n)/2}\,;q^{1-\,n}\,\Bigg|\,q\,;-\frac{q^{\,n}s}{x^2}
\Bigg)\,,                                                 \eea
where $[n]_q:=(1-q^n)/(1-q)$.

A $q$-extension of the generating function $f_L(x,s;t)$, associated
with the $q$-Lucas polynomials $L_n(x,s\,|\,q)$, has the form
({\it cf.} (2.15))
\bea
f_L(x,s;t|\,q)&:=&\,\sum_{n=0}^{\infty}\,L_{n}(x,s\,|\,q)\,t^{\,n}
\,=\,\frac{1+st^2}{1-x\,t}\,{}_1\phi_1\Big(q\,;qxt\Big|\,q\,;
-\,qst^2\Big)                                       \nonumber\\
&=&(1+st^2)\,{}_2\phi_1\Big(\!-\frac{qst}{x}\,,q\,;0\,\Big|\,q
\,;xt\Big)\,,\qquad\qquad|\,t\,|<1\,.       \label{pippo10}              \eea
In the limit as $q\to1$ the $q$-Lucas polynomials $L_n(x,s\,|\,q)$
reduce to the polynomials $L_n(x,s)$, given in (2.14), and the
generating function (\ref{pippo10}) for $L_n(x,s\,|\,q)$ coincides in this
limit with (2.15).
The appearance of the two equivalent expressions in (\ref{pippo10}) in
terms of the either ${}_1\phi_1$\,, or ${}_2\phi_1$ basic
hypergeometric functions is, as before, fully consistent with a limit
case of Heine's transformation formula (\ref{pippo6}).

Since the polynomials  $F_{n}(x,s)$ and $L_{n}(x,s)$ satisfy
the same recurrence relation (2.1) but with different initial conditions,
they are known to be interconnected by the relation  $L_{n}(x,s)=F_{n+1}
(x,s)+s\,F_{n-1}(x,s)$ ({\it cf.} the classical relation $2T_n(x)=U_n(x)
-U_{n-2}(x)$ for the Chebyshev polynomials). Similarly, a $q$-extension 
of this relation,
\be \label{pippo11}
L_{n}(x,s|\,q)=F_{n+1}(x,s|\,q)+s\,F_{n-1}(x,s|\,q)\,,
\ee
interconnects two $q$-polynomial families $F_{n}(x,s|\,q)$ and $L_{n}(x,s|\,q)$.
The relation (\ref{pippo11}) is readily verified by using the explicit forms (\ref{pippo3}) and
(\ref{pippo9}) of the polynomials $F_{n}(x,s|\,q)$ and $L_{n}(x,s|\,q)$ and an identity
$$
{n-k\atopwithdelims[]k }_q\,=\,\frac{1-q^{n-k}}{1-q^{k}}
\,{n-k-1\atopwithdelims[]k -1}_q                      $$
for the $q$-binomial coefficient (\ref{pippo3a}).

\section{Fourier transforms of $F_n(x,s|\,q)$ and $L_n(x,s|\,q)$}
In this section we derive
explicit formulas of the classical Fourier integral transform for the
$q$-Fibonacci and $q$-Lucas polynomials $F_n(x,s|\,q)$ and $L_n(x,s|\,q)$.

\subsection{$q$-Fibonacci} 
Let us consider the $F_n(x,s|\,q)$ polynomials. 
 To that end  let us first
define how this $q$-polynomial family changes under the transformation
$q\to 1/q$. If one rewrites the defining sum formulas (\ref{pippo2}) for
$F_n(x,s|\,q)$ as
\be \label{pippo4.1}
F_{n+1}(x,s|\,q)= \sum_{k=0}^{\lfloor\,n/2\,\rfloor}\,c_{n,\,k}
^{(F)}(q)\,s^{\,k}\,x^{\,n-\,2\,k}\,,                      \ee
then the coefficients in (\ref{pippo4.1}) are
\be \label{pippo4.2}
c_{n,\,k}^{(F)}(q):= \,q^{k(k+1)/2}\,{n-k\atopwithdelims[]k }_q\,.\ee
From definition of the $q$-binomial coefficient ${\,n\,\atopwithdelims[]
\,k\,}_q$ in (\ref{pippo3a}) it is not hard to derive an inversion formula
\be
{\,n\,\atopwithdelims[]\,k\,}_{1/q}\,=\,q^{k(k-\,n)}
\,{\,n\,\atopwithdelims[]\,k\,}_q                \ee
with respect to the change $q\to 1/q$. Consequently, from (4.2) and
(4.3) it follows at once that
\be
c_{n,\,k}^{(F)}\Big(q^{-1}\Big)\,=\,q^{k(k-n-1)}\,c_{n,\,k}^{(F)}(q)\,.
\ee
This means that ({\it cf.} (\ref{pippo3}))
$$
F_{n+1}\Big(x,s\Big|\,q^{-1}\Big)\,\equiv\,\sum_{k=0}^{\lfloor\,n/2\,
\rfloor}\,c_{n,\,k}^{(F)}(q^{-1})\,s^{\,k}\,x^{\,n-\,2\,k}=\sum_{k=0}
^{\lfloor\,n/2\,\rfloor}\,q^{k(k-n-1)}\,c_{n,\,k}^{(F)}(q)\,s^{\,k}
\,x^{\,n-\,2\,k}                                                 $$
\be=x^{\,n}{}_4\phi_3\Bigg(q^{-\,n/2},q^{(1-n)/2},-\,q^{-\,n/2},-\,
q^{(1-n)/2}\,;q^{-\,n}\,,0\,,0\Bigg|\,q\,;-\frac{s}{q\,x^{\,2}}\Bigg)\,.
\ee
Take into account the well-known Fourier transform
$$
\int_{\RR}\,e^{\,{\rm i}xy -\,x^2/2}\,dx\,=\,\sqrt{2\pi}\,e^{-\,y^2/2} $$
for the Gauss exponential function $e^{-\,x^2/2}$ and computing the Fourier
integral transform of the exponential function $\exp\,[\,{\rm i}(n-2k)\,
\kappa\,x - x^2/2\,]$, we get
\bea
\int_{\RR}\,e^{\,{\rm i}xy\,+ \,{\rm i}\,(n-\,2k)\,\kappa\,x-\,x^2/2}
\,dx&=&\sqrt{2\pi}\,e^{-\,[\,y+(n-\,2k)\,\kappa\,]^2/2} \nonumber \\
&=&\sqrt{2\pi}\,q^{\,n^2/4}\,q^{\,k(k-n)}\,e^{-\,(n-\,2k)
\,\kappa\,y\,-\,y^2/2}\,,                           \eea
where $q=e^{-\,2{\kappa}^2}$.

We are now in a position to formulate and prove the following theorem.
\begin{theorem}
The classical Fourier integral transform of the $q$-Fibonacci polynomials
$F_{n+1}(a\,e^{\,{\rm i}\,\kappa\,x},s|\,q)$ times the Gauss exponential
function $e^{-\,x^2/2}$ has the form:
\be
\int_{\RR}\,F_{n+1}\Big(a\,e^{\,{\rm i}\,\kappa\,x},s\Big|\,q\Big)\,e
^{\,{\rm i}\,x\,y\,-\,x^2/2}\,dx\,=\,\sqrt{2\pi}\,q^{\,n^2/4}\,F_{n+1}
\Big(a\,e^{-\kappa\,y},qs\Big|\,q^{-1}\Big)\,e^{-\,y^2/2}\,,      \ee
where $a$ is an arbitrary constant factor.
\end{theorem}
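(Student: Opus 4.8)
The plan is to expand the $q$-Fibonacci polynomial under the integral sign and reduce the whole transform to the single Gaussian Fourier integral (4.6). First I would substitute $x\mapsto a\,e^{\,{\rm i}\,\kappa\,x}$ into the defining sum formula (4.1), obtaining
\[
F_{n+1}\big(a\,e^{\,{\rm i}\,\kappa\,x},s\,\big|\,q\big)=\sum_{k=0}^{\lfloor n/2\rfloor}c_{n,k}^{(F)}(q)\,s^{\,k}\,a^{\,n-2k}\,e^{\,{\rm i}\,(n-2k)\,\kappa\,x}.
\]
Because this is a finite sum, summation and integration may be interchanged with no analytic subtlety, so the left-hand side of (4.7) becomes a finite linear combination of the elementary integrals $\int_{\RR}e^{\,{\rm i}\,x\,y+{\rm i}(n-2k)\kappa x-x^2/2}\,dx$, each evaluated by (4.6).

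Next I would collect the output of (4.6). Pulling out the common factor $\sqrt{2\pi}\,q^{\,n^2/4}\,e^{-y^2/2}$, the $k$-th term contributes $c_{n,k}^{(F)}(q)\,q^{\,k(k-n)}\,s^{\,k}\,a^{\,n-2k}\,e^{-(n-2k)\kappa y}$. Two observations finish the identification: the product $a^{\,n-2k}\,e^{-(n-2k)\kappa y}$ is exactly $(a\,e^{-\kappa y})^{\,n-2k}$, so the transformed polynomial is automatically evaluated at the point $a\,e^{-\kappa y}$; and the leftover $q$-power $q^{\,k(k-n)}$ must be absorbed into the coefficient by means of the inversion relation.

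The arithmetic core is the identity (4.4), $c_{n,k}^{(F)}(q^{-1})=q^{\,k(k-n-1)}\,c_{n,k}^{(F)}(q)$, which I would use to write $c_{n,k}^{(F)}(q)\,q^{\,k(k-n)}=q^{\,k}\,c_{n,k}^{(F)}(q^{-1})$, the exponent check being $k(k-n)-k(k-n-1)=k$. The lone factor $q^{\,k}$ is then merged with $s^{\,k}$ into $(qs)^{\,k}$, and the surviving sum
\[
\sum_{k=0}^{\lfloor n/2\rfloor}c_{n,k}^{(F)}(q^{-1})\,(qs)^{\,k}\,(a\,e^{-\kappa y})^{\,n-2k}
\]
is recognized, by reading (4.1) with $q$ replaced by $q^{-1}$, $s$ by $qs$, and $x$ by $a\,e^{-\kappa y}$, as precisely $F_{n+1}\big(a\,e^{-\kappa y},qs\,\big|\,q^{-1}\big)$; this is (4.7). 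I anticipate no genuine obstacle: the entire content lies in bookkeeping the three $q$-dependent exponents so that (4.4) leaves exactly one power $q^{\,k}$ per term — the single place where a sign error or an off-by-one in the exponent would spoil the clean $s\mapsto qs$, $q\mapsto q^{-1}$ substitution. The interchange of sum and integral and the completion of the square hidden in (4.6) are entirely routine.
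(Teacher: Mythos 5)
Your proposal is correct and follows essentially the same route as the paper's own proof: expand via the sum formula (4.1), integrate term by term with the Gaussian transform (4.6), and use the inversion relation (4.4) to absorb the leftover factor $q^{\,k(k-n)}$ as $q^{\,k}\,c_{n,\,k}^{(F)}(q^{-1})$, turning $s^{\,k}$ into $(qs)^{\,k}$. The exponent bookkeeping $k(k-n)-k(k-n-1)=k$ is exactly the step the paper relies on, so nothing is missing.
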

\begin{proof}
Using (\ref{pippo2}), the Fourier
integral transform (4.6) and  the interrelation (4.4) between
the coefficients $c_{n,\,k}^{(F)}(q)$ and $c_{n,\,k}^{(F)}(q^{-1})$, we get
\bea
& & \int_{\RR}\,F_{n+1}\Big(a\,e^{\,{\rm i}\kappa\,x},s\Big|\,q\Big)
\,e^{\,{\rm i}\,x\,y\,-\,x^2/2}\,dx                   \nonumber   \\
& &=\sum_{k=0}^{\lfloor\,n/2\,\rfloor}\,c_{n,\,k}^{(F)}(q)\,s^{\,k}\,
a^{\,n-\,2\,k}\,\int_{\RR}\,e^{\,{\rm i}\,x\,y\,+ \,{\rm i}\,(n-\,2k)
\,\kappa\,x-\,x^2/2}\,dx                                 \nonumber \\
& & =\,\sqrt{2\pi}\,q^{\,n^2/4}\,e^{-\,y^2/2}\,\sum_{k=0}^{\lfloor
\,n/2\,\rfloor}\,q^{k(k-n)}\,c_{n,\,k}^{(F)}(q)\,s^{\,k}\,\Big
(a\,e^{-\,\kappa\,y}\Big)^{\,n-\,2\,k}             \nonumber \\
& & =\,\sqrt{2\pi}\,q^{\,n^2/4}\,e^{-\,y^2/2}\,\sum_{k=0}^{\lfloor
\,n/2\,\rfloor}\,c_{n,\,k}^{(F)}(q^{-1})\,(q\,s)^{\,k}\,\Big(a\,e^
{-\,\kappa\,y}\Big)^{\,n-\,2\,k}                      \nonumber \\
& &=\,\sqrt{2\pi}\,q^{\,n^2/4}\,F_{n+1}\Big(a\,e^{-\,\kappa\,y}
,qs\Big|\,q^{-1}\Big)\,e^{-\,y^2/2}\,.      \nonumber    \eea
\end{proof}

\subsection{$q$-Lucas} We turn now to determine an explicit form
of classical Fourier integral transform for the $q$-Lucas polynomials
$L_n(x,s\,|\,q)$. If one rewrites the defining sum formula (\ref{pippo8}) for
$L_n(x,s\,|\,q)$ as
\be
L_{n}(x,s\,|\,q)= \sum_{k=0}^{\lfloor\,n/2\,\rfloor}
\,c_{n,\,k}^{(L)}(q)\,s^{\,k}\,x^{\,n-\,2\,k}\,,\ee
then the coefficients in (4.8) are
\be
c_{n,\,k}^{(L)}(q):= \,q^{k(k-1)/2}\frac{[n]_q}{[n-k]_q}
\,{n-k\atopwithdelims[]k }_q\,.                     \ee
From the starting definition of the symbol $[n]_q$ in (\ref{pippo8}) it is not
hard to show that
\be
[n]_{1/q}\,=\,q^{1-\,n}\,[n]_q\,.     \ee
Consequently, from (4.3) and (4.10) it follows at once that
\be
c_{n,\,k}^{(L)}\Big(q^{-1}\Big)\,=\,q^{k(k-n)}
\,c_{n,\,k}^{(L)}(q)\,.                    \ee
This means that ({\it cf.} (\ref{pippo8}, \ref{pippo9}))
$$ L_{n}\Big(x,s\,\Big|\,q^{-1}\Big)\,\equiv\,\sum_{k=0}^{\lfloor\,n/2\,
\rfloor}\,c_{n,\,k}^{(L)}(q^{-1})\,s^{\,k}\,x^{\,n-\,2\,k}= \sum_{k=0}
^{\lfloor\,n/2\,\rfloor}\,q^{k(k-n)}\,c_{n,\,k}^{(L)}(q)\,s^{\,k}\,
x^{\,n-\,2\,k}                                                   $$
\be =x^{\,n}{}_4\phi_3\Bigg(q^{-\,n/2},q^{(1-n)/2},-\,q^{-\,n/2},-\,q^
{(1-n)/2}\,;q^{1-\,n}\,,0\,,0\Bigg|\,q\,;-\frac{q\,s}{x^{\,2}}\Bigg)\,. \ee
The next step is to take into account the Fourier integral transform
(4.6) in order to prove the following theorem.
\begin{theorem}
The Fourier integral transform of the $q$-Lucas polynomials $L_{n}(b\,
e^{\,{\rm i}\,\kappa\,x},s\,|\,q)$ times the Gauss exponential function
$e^{-\,x^2/2}$ has the form:
\be
\int_{\RR}\,L_{n}\Big(b\,e^{\,{\rm i}\,\kappa\,x},s\,\Big|\,q\Big)
\,e^{\,{\rm i}\,x\,y\,-\,x^2/2}\,dx\,=\,\sqrt{2\pi}\,q^{\,n^2/4}\,
L_{n}\Big(b\,e^{-\,\kappa\,y},s\,\Big|\,q^{-1}\Big)\,e^{-\,y^2/2}
\,,                                                           \ee
where $b$ is an arbitrary constant factor.
\end{theorem}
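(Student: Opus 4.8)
The plan is to proceed exactly in parallel with the proof of Theorem 4.1, replacing the $q$-Fibonacci ingredients by their $q$-Lucas analogues. First I would insert the explicit sum formula (4.8) into the integrand: the substitution $x \mapsto b\,e^{{\rm i}\kappa x}$ converts each monomial $x^{n-2k}$ into $b^{n-2k}\,e^{{\rm i}(n-2k)\kappa x}$, so that
$$L_n\big(b\,e^{{\rm i}\kappa x},s\,\big|\,q\big)=\sum_{k=0}^{\lfloor n/2\rfloor}c_{n,k}^{(L)}(q)\,s^k\,b^{n-2k}\,e^{{\rm i}(n-2k)\kappa x}.$$
Multiplying by $e^{{\rm i}xy-x^2/2}$ and integrating term by term, I would apply the Gaussian Fourier integral (4.6) with shift $(n-2k)\kappa$, which contributes the factor $\sqrt{2\pi}\,q^{n^2/4}\,q^{k(k-n)}\,e^{-(n-2k)\kappa y-y^2/2}$ to the $k$-th summand.

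After factoring out the common $\sqrt{2\pi}\,q^{n^2/4}\,e^{-y^2/2}$, the remaining sum is
$$\sum_{k=0}^{\lfloor n/2\rfloor}q^{k(k-n)}\,c_{n,k}^{(L)}(q)\,s^k\,\big(b\,e^{-\kappa y}\big)^{n-2k}.$$
The decisive step is the inversion relation (4.12), $c_{n,k}^{(L)}(q^{-1})=q^{k(k-n)}\,c_{n,k}^{(L)}(q)$. The exponent $k(k-n)$ produced by the Gaussian integral coincides identically with the one appearing in (4.12), so each product $q^{k(k-n)}\,c_{n,k}^{(L)}(q)$ collapses precisely to $c_{n,k}^{(L)}(q^{-1})$, and the sum becomes
$$\sum_{k=0}^{\lfloor n/2\rfloor}c_{n,k}^{(L)}(q^{-1})\,s^k\,\big(b\,e^{-\kappa y}\big)^{n-2k}=L_n\big(b\,e^{-\kappa y},s\,\big|\,q^{-1}\big),$$
which is exactly the right-hand side of (4.14).

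What I would highlight is not so much an obstacle as the structural reason the $q$-Lucas statement differs from its $q$-Fibonacci counterpart. In Theorem 4.1 the corresponding inversion law (4.4) carries the exponent $k(k-n-1)$, so combining it with the $q^{k(k-n)}$ from the integral leaves a residual $q^k$ that is absorbed by promoting $s$ to $qs$ in the argument of $F_{n+1}$. For the Lucas coefficients the additional ratio $[n]_q/[n-k]_q$ --- absent in the Fibonacci case --- transforms under $q\mapsto q^{-1}$ with an extra factor $q^{-k}$ by virtue of (4.11), and this shifts the exponent from $k(k-n-1)$ to $k(k-n)$, exactly annihilating the would-be $q^k$ and leaving the second argument equal to $s$. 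Thus the only genuine computation is the verification of (4.12) from (4.9), (4.3) and (4.11), which is routine; everything else is bookkeeping identical to the Fibonacci proof.
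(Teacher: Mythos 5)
Your argument is correct and coincides with the paper's own proof of this theorem: expand $L_{n}$ by the explicit sum formula, integrate term by term using the shifted Gaussian integral (4.6), and resum via the inversion relation $c_{n,\,k}^{(L)}(q^{-1})=q^{k(k-n)}\,c_{n,\,k}^{(L)}(q)$ (which is equation (4.11) of the paper; your equation labels are off by one, which is immaterial). The only quibble is with your closing heuristic: the shift of the exponent from $k(k-n-1)$ to $k(k-n)$ is produced jointly by the change of prefactor from $q^{k(k+1)/2}$ to $q^{k(k-1)/2}$ (contributing $+2k$ under inversion) and the factor $q^{-k}$ coming from $[n]_q/[n-k]_q$, not by the ratio alone---but this does not affect the validity of the proof.
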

\begin{proof}
Starting from (\ref{pippo8}) for the $q$-Lucas polynomials,
 using the Fourier integral transform (4.6)  and employing  the interrelation (4.11) between
the coefficients $c_{n,\,k}^{(L)}(q)$ and $c_{n,\,k}^{(L)}(q^{-1})$, we get

\bea
& & \int_{\RR}\,L_{n}\Big(b\,e^{\,{\rm i}\kappa\,x},s\,\Big|\,q\Big)
\,e^{\,{\rm i}\,x\,y\,-\,x^2/2}\,dx                   \nonumber   \\
& &=\sum_{k=0}^{\lfloor\,n/2\,\rfloor}\,c_{n,\,k}^{(L)}(q)\,s^{\,k}\,
b^{\,n-\,2\,k}\,\int_{\RR}\,e^{\,{\rm i}\,x\,y\,+ \,{\rm i}\,(n-\,2k)
\,\kappa\,x-\,x^2/2}\,dx                                 \nonumber \\
& & =\,\sqrt{2\pi}\,q^{\,n^2/4}\,e^{-\,y^2/2}\,\sum_{k=0}^{\lfloor
\,n/2\,\rfloor}\,q^{k(k-n)}\,c_{n,\,k}^{(L)}(q)\,s^{\,k}\,\Big
(b\,e^{-\,\kappa\,y}\Big)^{\,n-\,2\,k}             \nonumber \\
& & =\,\sqrt{2\pi}\,q^{\,n^2/4}\,e^{-\,y^2/2}\,\sum_{k=0}^{\lfloor
\,n/2\,\rfloor}\,c_{n,\,k}^{(L)}(q^{-1})\,s^{\,k}\,\Big(b\,
e^{-\,\kappa\,y}\Big)^{\,n-\,2\,k}             \nonumber \\
& &=\,\sqrt{2\pi}\,q^{\,n^2/4}\,L_{n}\Big(b\,e^{-\,\kappa\,y}
,s\,\Big|\,q^{-1}\Big)\,e^{-\,y^2/2}\,.        \nonumber \eea
\end{proof}


\section{Concluding remarks and outlook}

We have studied in detail the transformation properties with respect to
Fourier transform of the $q$-Fibonacci and $q$-Lucas polynomials, which
are governed by the non-conventional three-term recurrences (3.1). In particular, we have proved that these families of
$q$-polynomials exhibit a simple transformation behavior (4.7) and (4.13)
under  the classical Fourier integral transform.

Let us emphasize that one actually may use the Mehta--Dahlquist--Matveev
techniques (see \cite{Mehta, Dahl, Matv, Ata, NMARueW07}) in order to show
that the Fourier integral transformation formulas (4.7) and (4.13) in fact entail
a similar behavior of the $q$-Fibonacci and $q$-Lucas polynomials under the
discrete (finite) Fourier transform as well.

It is worthwhile to mention here  that there are other  possibilities
(than Cigler's $F_{n}(x,s|\,q)$ and $L_{n}(x,s|\,q)$) for constructing
$q$-extensions  of Fibonacci and Lucas polynomials of inte\-rest. For
instance, two monic $q$-polynomial families
\be
r_{n}^{(F)}(x|\,q)\,=\,x^{\,n}\,{}_2\phi_1\Bigg(q^{-\,n},
q^{1-n};\,q^{-\,2n}\,\Bigg|\,q^{\,2}\,;-\frac{1}{q x^2}\Bigg)\,, \ee
\be r_{n}^{(L)}(x|\,q)\,=\,x^{\,n}\,{}_2\phi_1\Bigg(q^{-\,n},
q^{1-n};\,q^{\,2(1-n)}\,\Bigg|\,q^2\,;-\frac{q}{x^2}\Bigg)\,,     \ee
represent very natural extensions of the Fibonacci and Lucas polynomials
$p_{\,n}^{(F)}(x)$ and $p_{\,n}^{(L)}(x)$, defined in (2.4) and (2.16)
respectively. Contrary to $F_{n}(x,s|\,q)$ and $L_{n}(x,s|\,q)$, these
$q$-polynomial families do satisfy standard three-term recurrence
relations of the form
\be
r_{n+1}(x|\,q)\,=\,x\,r_{n}(x|\,q)\,+ \frac
{q^{n-1}}{(1+q^n)(1+q^{n+1})}\,r_{n-1}(x|\,q)\,,\ee
where by $r_{n}$ we mean either $r_{n}^{(F)}$ or $r_{n}^{(L)}$.
Moreover, the  $q$-polynomials $r_{n}^{(F)}(x|\,q)$ and  $r_{n}^{(L)}(x|\,q)$
are associated with  the monic $q$-polynomial families
\be \label{5.5}
s_{n}^{(U)}(x|\,q)={\rm i}^{-n}\,r_{n}^{(F)}({\rm i}\,x|\,q)\,,\quad
\quad s_{n}^{(T)}(x|\,q)={\rm i}^{-n}\,r_{n}^{(L)}({\rm i}\,x|\,q)\,,
\ee
which do satisfy the conditions of Favard's characterization
theorem; thus they can be viewed as natural $q$-extensions of the Chebyshev
polynomials $U_n(x)$ and $T_n(x)$. It should be noted that the polynomials
$s_{n}^{(U)}(x|\,q)$ and  $s_{n}^{(T)}(x|\,q)$, explicitly given by
\be \label{5.6}
s_{n}^{(U)}(x|\,q)\,=\,x^{\,n}\,{}_2\phi_1\Bigg(q^{-\,n},
q^{1-n};\,q^{-\,2n}\,\Bigg|\,q^{\,2}\,;\frac{1}{q x^2}\Bigg)\,, \ee
\be s_{n}^{(T)}(x|\,q)\,=\,x^{\,n}\,{}_2\phi_1\Bigg(q^{-\,n},
q^{1-n};\,q^{\,2(1-n)}\,\Bigg|\,q^2\,;\frac{q}{x^2}\Bigg)\,,   \label{5.6}  \ee
are of interest on their own. 

 It is well known that the Chebyshev polynomials
$U_n(x)$ and $T_n(x)$ are the special cases of the Jacobi polynomials
$P_{n}^{(\alpha,\beta)}(x)$ with the parameters $\alpha=\beta=1/2$ and
$\alpha=\beta=-1/2$  respectively. Therefore it seems natural to expect that
the continuous $q$-Jacobi polynomials $P_{n}^{(\alpha,\beta)}(x|\,q)$ (which
evidently represent  $q$-extensions of the Jacobi polynomials $P_{n}^{(\alpha,\beta)}(x)$)
with the particular values the parameters $\alpha=\beta=1/2$ and $\alpha=\beta=-1/2$,
could provide appropriate $q$-extensions of the Chebyshev polynomials $U_n(x)$
and $T_n(x)$  respectively. Under closer examination however, it turns out
that the continuous $q$-Jacobi polynomials $P_{n}^{(1/2,1/2)}(x|\,q)$ and
$P_{n}^{(-1/2,-1/2)}(x|\,q)$ {\it are only constant (but $q$-dependent) multiples}
of the Chebyshev polynomials $U_n(x)$ and $T_n(x)$. In other words, the continuous
$q$-Jacobi polynomials $P_{n}^{(1/2,1/2)}(x|\,q)$ and $P_{n}^{(-1/2,-1/2)}(x|\,q)$
differ from the Chebyshev polynomials $U_n(x)$ and $T_n(x)$  only for the choice of
the normalization constants; therefore the former two polynomial families are just trivial
$q$-extensions of the latter ones.\footnote{This curious ``$q$-degeneracy" of the
continuous $q$-Jacobi polynomials $P_{n}^{(\alpha,\beta)}(x|\,q)$ for the values of
the parameters $\alpha=\beta=1/2$ and $\alpha=\beta=-1/2$  was first noticed by
R.Askey and J.A.Wilson in their seminal work \cite{AW}. We are grateful to Tom
Koornwinder for reminding us of this fact.}

The polynomials $ s_{n}^{(U)}(x|\,q)$ and $ s_{n}^{(T)}(x|\,q)$ can be thus viewed as
non-trivial compact $q$-extensions of the Chebyshev polynomials $U_n(x)$ and $T_n(x)$,
which do not match with the continuous $q$-Jacobi polynomials $P_{n}^{(1/2,1/2)}(x|\,q)$
and $P_{n}^{(-1/2,-1/2)}(x|\,q)$. Observe that both of them can be expressed in terms
of the little $q$-Jacobi polynomials $p_{n}(x;a,b|\,q)$ as
$$
s_{2n}^{(U)}(x|\,q)\,=\,(-1)^n\,q^{n(n-1)}\frac{(q;q^2)_n}{(q^{2(n+1)};q^2)_{n}}
\,p_{n}\Big(x^2;q^{-1},q\Big|\,q^2\Big)\,,
$$
\be \label{5.8}
s_{2n+1}^{(U)}(x|\,q)\,=\,(-1)^n\,q^{n(n-1)}\frac{(q^3;q^2)_n}{(q^{2(n+2)};q^2)_{n}}
\,x\, p_{n}\Big(x^2;q,q\Big|\,q^2\Big)\,,
\ee
and
$$
s_{2n}^{(T)}(x|\,q)\,=\,(-1)^n\,q^{n(n-1)}\frac{(q;q^2)_n}{(q^{2n};q^2)_{n}}
\,p_{n}\Big(x^2;q^{-1},q^{-1}\Big|\,q^2\Big)\,,
$$
\be \label{5.9}
s_{2n+1}^{(T)}(x|\,q)\,=\,(-1)^n\,q^{n(n-1)}\frac{(q^3;q^2)_n}{(q^{2(n+1)};q^2)_{n}}
\,x\, p_{n}\Big(x^2;q,q^{-1}\Big|\,q^2\Big)\,,
\ee
where $p_{n}(x;a,b\,|\,q):={}_2\phi_1(q^{- n}, ab\,q^{n+1}; aq\,|\,q\,; qx)$
(see, for example, (14.12.1), p.482 in \cite{KSL}). It is of considerable
interest to examine the properties of the polynomials $ s_{n}^{(U)}(x|\,q)$
and $ s_{n}^{(T)}(x|\,q)$ in more detail, including their transformation
properties with respect to the Fourier integral transform. This project is
beyond the subject of this paper and will be dealt with elsewhere.

\section*{Acknowledgements}

Discussions with T.H.Koornwinder and K.B.Wolf are gratefully acknowledged. 
One of us (NMA) would like to thank the Physics Department and Department
of Electronic Engineering, University Roma Tre, Italy, for the hospitality 
extended to him during his visit in September--October 2011, when the final 
part of this work was carried out. The participation of NMA in this work has 
been supported by the DGAPA-UNAM IN105008-3 and SEP-CONACYT 79899 projects 
``\,\'Optica Matem\'atica". DL  and OR have been partly supported by the 
Italian Ministry of Education and Research, 2010 PRIN ``Continuous and discrete 
nonlinear integrable evolutions: from water waves to symplectic maps".

\bigskip
\noindent {\Large{\bf Appendices}}

\appendix
\section{Proof of (2.10)}

 In order to give a direct proof of the transformation formula (2.10)
 we start by the defining
relation for the hypergeometric ${}_2F_1$-polynomial on the
left side of (2.10) and rewrite it
\bea
\!\!\!{}_2F_1\Big(-n\,,n+2\,;\,3/2\,\Big|\frac{1-z}{2}\,\Big)\!
&\!:=& \sum_{k=0}^n \frac{(-\,n)_k (n+2)_k}{(3/2)_k}\frac
{(1-z)^k}{2^k\,k!}                          \nonumber   \\
\!&=&\sum_{k=0}^n {n\atopwithdelims()k}\frac{(-1)^k\,(n+2)_k}{2^k
\,(3/2)_k}\,\sum_{l=0}^k\,{k\atopwithdelims()l}(-z)^l,     \eea
by employing the relation $(-n)_k=(-1)^k\,n!/(n-k)!$. The next
step is to reverse the order of summation in (A.1) with respect
to the indices $k$ and $l$, which leads to the relation
\newpage
$$
{}_2F_1\Big(-n\,,n+2\,;\,3/2\,\Big|\frac{1-z}{2}\,\Big)$$
\be
=\frac{\Gamma(3/2)}{n+1}\sum_{l=0}^n \frac{\Gamma(2n+2-l)}
{\Gamma(n-l+3/2)}\frac{(z/2)^{n-\,l}}{l!\,(n-l)!}\sum_{k=0}^l
\frac{(-l)_k\,(2n+2-l)_k}{(n-l+3/2)_k}\frac{1}{2^l\,l!}\,. \ee
The sum over index $k$ in (A.2) represents the hypergeometric
polynomial
$$
{}_2F_1\Big(-l\,,2n+2-l\,;\,n-l+3/2\,\Big|\,x\,\Big)  $$
for a special value of the variable $x=1/2$, which can be
evaluated by Gauss's second summation theorem (see, for example,
(1.7.1.9) on p.32 in \cite{Slat})
\bea
{}_2F_1\Big(2a\,,2b\,;\,a+b+1/2\,\Big|\,1/2\,\Big)&=&\frac
{\Gamma\Big(1/2\Big)\Gamma\Big(a+b+1/2\Big)}{\Gamma\Big(a+
1/2\Big)\Gamma\Big(b+1/2\Big)}\,,             \nonumber \\
a+b+1/2&\neq&-\,m\,,\quad\quad m\geq0\,,              \eea
with $a=-\,l/2$ and $b=n+1-l/2$ (so that $a+b+1/2=n-l+3/2\geq3/2$
for all $0\leq l\leq n$). The sum over index $k$ in (A.2) thus
reduces to
\be
{}_2F_1\Big(-l\,,2n+2-l\,;\,n-l+3/2\,\Big|\,1/2\,\Big)
=\frac{\Gamma\Big(1/2\Big)\Gamma\Big(n-l+3/2\Big)}{\Gamma
\Big((2n-l+3)/2\Big)\Gamma\Big((1-l)/2\Big)}\,.       \ee
Since the gamma function $\Gamma(z)$ has poles at the points $z=-\,n$,
$\,n\geq 0$, the right-hand side of (6.5) vanishes for all odd
values of the index $l$ due to the presence of the factor
$\Gamma\Big((1-l)/2\Big)$ in its denominator. This means that
only terms with even $l$'s give non-zero contribution into the
sum over $l$ in (A.2), that is,
\bea
& & {}_2F_1\Big(-n\,,n+2\,;\,3/2\,\Big|\frac{1-z}
{2}\,\Big)                            \nonumber \\
& &=\frac{\pi}{2(n+1)}\sum_{m=0}^{\lfloor\,n/2\,\rfloor}
\frac{\Gamma(2n+2-2m)}{\Gamma(n-m+3/2)\Gamma(1/2-m)}
\,\frac{(z/2)^{n-2m}}{(n-2m)!\,(2m)!}  \nonumber \\
& &=\frac{\sqrt{\pi}}{(n+1)}\sum_{m=0}^{\lfloor\,n/2\,\rfloor}
\frac{\Gamma(n+1-m)}{\Gamma\Big(\frac{n+1}{2}-m\Big)\Gamma\Big
(\frac{n}{2}+1-m\Big)}\,\frac{(-1)^m\,z^{n-2m}}{m!}\,,    \eea
where at the last step we employed duplication formula
\be
\Gamma(2z)\,=\,\frac{2^{\,2z-1}}{\sqrt{\pi}}\,\Gamma(z)
\,\Gamma(z+1/2)                                    \ee
for the gamma function $\Gamma(z)$ and the relation $\Gamma(m+1/2)
\,\Gamma(1/2-m)=\pi/\cos{m\pi}=(-1)^m\,\pi$. Finally, it remains
only to use the identity $\Gamma(z+1-n)=(-1)^n\,\Gamma(z+1)/\,(-z)_n$
and again the duplication formula (A.6) in order to show that
\bea
& & {}_2F_1\Big(-n\,,n+2\,;\,3/2\,\Big|\frac{1-z}{2}
\,\Big)                                  \nonumber \\
& &=\frac{\sqrt{\pi}}{n+1}\,\frac{\Gamma(n+1)}{\Gamma
\Big(\frac{n+1}{2}\Big)\Gamma\Big(\frac{n}{2}+1\Big)}
\,\sum_{m=0}^{\lfloor\,n/2\,\rfloor}\frac{\Big(\!-\frac
{n}{2}\Big)_m\Big(\frac{1-n}{2}\Big)_m}{(-n)_m}\,\frac
{z^{n-2m}}{m!}                            \nonumber \\
& &= \frac{2^n}{n+1}\,\sum_{m=0}^{\lfloor\,n/2\,\rfloor}
\frac{\Big(\!-\frac{n}{2}\Big)_m\Big(\frac{1-n}{2}\Big)
_m}{(-n)_m}\,\frac{z^{n-2m}}{m!}          \nonumber \\
& &=\frac{(2z)^n}{n+1}\,\,{}_2F_1\Big(-\frac{n}{2}\,,
\frac{1-n}{2}\,;\,-n\,\Big|\,z^{-\,2}\,\Big)\,.    \eea
This completes the proof of transformation formula (2.10).

\section{Proof of (2.21)}
We  prove here the  transformation
formula (2.21)
which was stated in section 2.  One starts with the defining
relation for the hypergeometric ${}_2F_1$ polynomial on the
left side of (2.21) and evaluates first (for $n\geq1$) that
\bea
\!\!\!{}_2F_1\Big(-n\,,n\,;\,1/2\,\Big|\frac{1-z}{2}
\,\Big)\!&\!:=& \sum_{k=0}^n \frac{(-\,n)_k\,(n)_k}
{(1/2)_k}\frac{(1-z)^k}{2^k\,k!}      \nonumber  \\
\!&=&\sum_{k=0}^n {n\atopwithdelims()k}\frac{(-1)^k\,(n)_k}
{2^k\,(1/2)_k}\,\sum_{l=0}^k\,{k\atopwithdelims()l}(-z)^l, \eea
by employing the relation $(-n)_k=(-1)^k\,n!/(n-k)!$. The next step
is to reverse the order of summation in (B.1) with respect to the
indices $k$ and $l$, which leads to the relation
$$
{}_2F_1\Big(-n\,,n\,;\,1/2\,\Big|\frac{1-z}{2}\,\Big)$$
\be
=\,n\,\Gamma(1/2)\,\sum_{l=0}^n \frac{\Gamma(2n-l)}{\Gamma
(n-l+1/2)}\frac{(z/2)^{n-\,l}}{l!\,(n-l)!}\sum_{k=0}^l\frac
{(-l)_k\,(2n-l)_k}{(n-l+1/2)_k}\frac{1}{2^k\,k!}\,.     \ee
The sum over index $k$ in (B.2) represents the hypergeometric
polynomial
$$
{}_2F_1\Big(-l\,,2n-l\,;\,n-l+1/2\,\Big|\,x\,\Big)  $$
for a special value of the variable $x=1/2$, which can be
evaluated by Gauss's second summation theorem (see, for example,
(1.7.1.9) on p.32 in \cite{Slat})
\bea
{}_2F_1\Big(2a\,,2b\,;\,a+b+1/2\,\Big|\,1/2\,\Big)&=&\frac
{\Gamma\Big(1/2\Big)\Gamma\Big(a+b+1/2\Big)}{\Gamma\Big(a+
1/2\Big)\Gamma\Big(b+1/2\Big)}\,,             \nonumber \\
a+b+1/2&\neq&-\,m\,,\quad\quad m\geq0\,,              \eea
with parameters $a=-\,l/2$ and $b=n-l/2$ (so that $a+b+1/2=n-l+1/2
\geq1/2$ for all $0\leq l\leq n$). The sum over index $k$ in (B.2)
thus reduces to
\be
{}_2F_1\Big(-l\,,2n-l\,;\,n-l+1/2\,\Big|\,1/2\,\Big)
=\frac{\Gamma\Big(1/2\Big)\Gamma\Big(n-l+1/2\Big)}{\Gamma
\Big((2n-l+1)/2\Big)\Gamma\Big((1-l)/2\Big)}\,.       \ee
Since the gamma function $\Gamma(z)$ has poles at the points $z=-\,n$,
$\,n\geq 0$, the right-hand side of (B.4) vanishes for all odd values of
the index $l$ due to the presence of the factor $\Gamma\Big((1-l)/2\Big)$
in its denominator. This means that only terms with even $l$'s give
non-zero contribution into the sum over $l$ in (B.2), that is,
\bea
& & {}_2F_1\Big(-n\,,n\,;\,1/2\,\Big|\frac{1-z}
{2}\,\Big)                         \nonumber \\
& &=n\,\pi\,\sum_{m=0}^{\lfloor\,n/2\,\rfloor}\frac
{\Gamma(2n-2m)}{\Gamma(n-m+1/2)\,\Gamma(1/2-m)}\,\frac
{(z/2)^{\,n-\,2m}}{(n-2m)!\,(2m)!}  \nonumber \\
& &=\frac{n}{2}\,\sqrt{\pi}\,\sum_{m=0}^{\lfloor\,n/2\,\rfloor}
\frac{\Gamma(n-m)}{\Gamma\Big(\frac{n+1}{2}-m\Big)\Gamma\Big
(\frac{n}{2}+1-m\Big)}\,\frac{(-1)^m\,z^{\,n-\,2m}}{m!}\,, \eea
where at the last step we employed duplication formula (A.6) for
the gamma function $\Gamma(z)$ and the relation $\Gamma(m+1/2)
\,\Gamma(1/2-m)=\pi/\cos{m\pi}=(-1)^m\,\pi$. Finally, it remains
only to use the identity $\Gamma(z+1-n)=(-1)^n\,\Gamma(z+1)/
\,(-z)_n$ and again the duplication formula (A.6) in order to
show that

\bea
& & {}_2F_1\Bigg(-n\,,n\,;\,1/2\,\Bigg|\frac{1-z}{2}
\,\Bigg)                                \nonumber \\
& &=\frac{\sqrt{\pi}}{2}\,\frac{\Gamma(n+1)}{\Gamma
\Big(\frac{n+1}{2}\Big)\Gamma\Big(\frac{n}{2}+1\Big)}
\,\sum_{m=0}^{\lfloor\,n/2\,\rfloor}\frac{\Big
(\!-\frac{n}{2}\Big)_m\Big(\frac{1-n}{2}\Big)_m}
{(1-n)_m}\,\frac{z^{\,n-\,2m}}{m!}  \nonumber \\
& &= 2^{n-1}\,\sum_{m=0}^{\lfloor\,n/2\,\rfloor}
\frac{\Big(\!-\frac{n}{2}\Big)_m\Big(\frac{1-n}{2}
\Big)_m}{(1-\,n)_m}\,\frac{z^{\,n-\,2m}}{m!}\nonumber\\
& &=2^{n-1}\,z^n\,{}_2F_1\Big(-\frac{n}{2}\,,\frac
{1-n}{2}\,;\,1-n\,\Big|\,1/z^{\,2}\,\Big)\,.  \eea
This completes the proof of the transformation formula (2.21).

\end{document}